\newtheorem{theorem}{Theorem}[section]
\newtheorem{lemma}[theorem]{Lemma}
\newtheorem{proposition}[theorem]{Proposition}
\theoremstyle{definition}
\newtheorem{definition}[theorem]{Definition}
\newtheorem{example}[theorem]{Example}
\theoremstyle{remark}
\newtheorem{remark}[theorem]{Remark}
\numberwithin{equation}{section}
\newcommand\Rb{{\mathbb R}}  
\newcommand\Zb{{\mathbb Z}}  
\newcommand\Cb{{\mathbb C}}  
\newcommand\Lc{{\mathcal L}} 
\newcommand\M{{\mathcal M}}  
\newcommand\Sc{{\mathcal S}} 
\newcommand\Xs{{\mathscr X}} 
\newcommand\Ps{{\mathscr P}} 
\newcommand\Gs{{\mathscr G}} 
\newcommand\Gc{{\mathcal G}} 
\newcommand{\arrows}{\rightrightarrows} 
\newcommand{\wt}{\widetilde}
\newcommand{\To}{\longrightarrow}
\newcommand{\lTo}{\longleftarrow}
\newcommand{\dashto}{\dashrightarrow}
\newcommand{\into}{\hookrightarrow}
\newcommand{\tof}[1]{\stackrel{#1}{\rightarrow}}
\newcommand{\Tof}[1]{\stackrel{#1}{\longrightarrow}}
\newcommand{\wh}{\widehat}
\DeclareMathOperator{\Hom}{Hom}
\DeclareMathOperator{\AUT}{AUT}
\DeclareMathOperator{\Ch}{Ch}
\DeclareMathOperator{\id}{id}
\DeclareMathOperator{\Aut}{Aut}
\DeclareMathOperator{\HOM}{HOM}
\DeclareMathOperator{\ev}{ev}
\DeclareMathOperator{\Aff}{Aff}
\DeclareMathOperator{\LCH}{\mathcal{LCH}}
\DeclareMathOperator{\Pic}{\textbf{Pic}} 
\DeclareMathOperator{\chain}{\textbf{Ch}^{[-1,0]}} 
\DeclareMathOperator{\Pont}{{Pont}} 
\begin{document}

\title{Geometric T-dualization}
\author{Calder Daenzer}
\address{Department of Mathematics, Pennsylvania State University, University Park, Pennsylvania 16801}
\email{daenzer@math.psu.edu}
\subjclass[2000]{Primary 51P05; Secondary 18F10}
\date{September 30, 2013}
\maketitle
\begin{abstract} In this article we realize T-duality as a geometric transform of bundles of abelian group stacks.  The transform applies in the algebro-geometric setting as well as the topological setting, and thus makes precise the link between the models of T-duality in these two settings, which are superficially quite different.  The transform is also valid for torus bundles with affine structure group, and thus provides a constructive method of producing T-duals in the affine case.
\end{abstract}
\tableofcontents
\section{Introduction}
After the discovery of T-duality as an equivalence relating different string theory types \cite{Bus}, there arrived numerous mathematical models of the phenomenon. The result has been the growth of a strikingly wide range of mathematical descriptions of T-duality.  For instance:
\begin{itemize}
\item In algebraic geometry, T-duality it is typically viewed as a gerby enhancement of a relative Jacobian operation for toric fibrations.  In appropriate cases the duality induces a Fourier--Mukai equivalence between the derived categories of the dual fibrations (see \cite{DonPan}).
    \item In topology and noncommutative geometry, T-duality is defined as a relation between a $U(1)$-gerbe on a principal torus bundle and a ``dual'' gerbe on a ``dual'' principal torus bundle.  The dual bundle may be a family on noncommutative tori.  One consequence of the relation is that the dual objects have isomorphic K-theory (see \cite{MatRos}, \cite{Dae}, \cite{BunSch}).
\item In differential geometry, T-duality is defined as a relation on the set of pairs $\{P,H\}$ in which $P$ is a principal torus bundle over a fixed base manifold, and $H\in\Omega^3(P)$ is a $T$-invariant form.  This relation induces an isomorphism from the (twisted) generalized geometrical structures on $P$ to those on the T-dual bundle, which changes the structure type (for instance a complex structure on $P$ can map to a symplectic structure on the dual, see \cite{CavGua}).
\end{itemize}

There has always been a tacit assumption that these models are indeed the different mathematical faces of a single phenomenon, but only in specific cases is this easy to verify.
In fact, only in the case of a single torus $T=V/\Lambda\simeq \Rb^n/\Zb^n$ is it truly obvious that the T-dual (in every model) should be given by
\[ T^\vee:=\Hom(\Lambda,U(1)), \]
that is, the Pontryagin dual of the fundamental group $\Lambda=\pi_1(T)$.

The purpose of this article is to describe a transform (\textbf{T-dualization}), which may be applied equally well in the topological, differential geometric, or algebro-geometric categories, and which should reproduce in each setting the correct notion of T-duality.

Beyond the unifying effect of producing such a transform, we will see two concrete results.  For one, it allows T-duality to be recast as a functorial operation rather than a many-to-many relation.  Secondly, the transform is not only valid for principal bundles with gerbe, but also for affine torus bundles with gerbe.  Thus it provides a constructive method of producing the T-duals to affine bundles, which were originally calculated using cohomological data by Baraglia \cite{Bar}.  (The affine case is important because in the SYZ-picture of mirror symmetry (\cite{SYZ}), the nonsingular loci of the toric fibrations in a mirror pair of Calabi--Yau 3-folds admit an affine structure, but typically not the structure of a principal bundle.)

Just to provide a rough picture of what T-dualization is, let us examine the case of a single torus $T=V/\Lambda$.  Dualization is defined in two steps:
\begin{enumerate}
\item Pass from $T$ to the stacky quotient $[T//V]$ of $T$ by $V$.
\item Apply a stacky version of Pontryagin duality\footnote{Pontryagin duality (or Cartier duality) for abelian group stacks has been used in \cite{DonPan} (Appendix A) and in \cite{BSST}.  Our definition does not coincide with the one which appears in those works, see Remark \eqref{R:Other Pont Duality defs}.} to $[T//V]$.
\end{enumerate}
Pontryagin duality for abelian group stacks will be defined in Section \eqref{S:Pontryagin Duality}, and it will be shown that $\Pont[T//V]$ is equivalent to the group $T^\vee=\Hom(\Lambda,U(1))$ (and thus agrees with the standard definition of the T-dual torus).  For now, this fact should at least seem plausible, in view of the isomorphism $[V//\Lambda]\simeq[*//\Lambda]$.

Thus in one line, $T$-dualization of a single torus is the procedure
 \[ T\longrightarrow [T//V] \rightsquigarrow \Pont([T//V]). \]
The general method of T-dualization will then amount to finding a suitable notion of modding out $V$ and forming the Pontryagin dual in the context of bundles of abelian group stacks.

\begin{remark}At first glance, it does not seem that T-dualization for abelian group stacks could apply to principal bundles with gerbe, because a principal torus bundle is not a bundle of groups (or group stacks).  But in fact, any abelian principal bundle can be canonically embedded into a bundle of groups (the bundle is the disjoint union of the $n$-th powers of the principal bundle, for all $n\in\Zb$), and a similar embedding applies for affine torus bundles.
\end{remark}
\noindent\textbf{Outline. } In Section \eqref{S:Short Complexes} we recall chain complex presentations of abelian group stacks.  In Section \eqref{S:Pontryagin Duality} we define Pontryagin duality using the chain complex presentations, and show that the result is independent of choice of presentation, and thus defines a duality for (representable) group stacks.  In Section \eqref{S:Picard Bundles} we upgrade Pontryagin duality to bundles of abelian group stacks, and describe the appropriate bundles of chain complexes which are presentations of these stacks.  In Section \eqref{S:Construction of T-duals} we define T-dualization, and finally, in Section \eqref{S:Relation} we describe how our construction produces the other models of T-duality.

\noindent\textbf{Acknowledgements. } The author would like to thank Martin Olsson, Ping Xu, and Peter Dalakov for many helpful discussions and insights.

\section{Presentations of Picard stacks}\label{S:Short Complexes} In this section we will recall two equivalent types of presentations of abelian group stacks (called Picard stacks).  One type of presentation is via Picard groupoids (also called abelian 2-groups).  The other type of presentation, due to Deligne, is via 2-term chain complexes of groups (which could also be called abelian crossed modules).  We will then examine morphisms of representable Picard stacks, and derive ways of expressing them in terms of exact sequences of groups.

\begin{definition} A \textbf{Picard groupoid} is a groupoid internal to the category of abelian groups.  We denote by \textbf{Pic} the strict 2-category of groupoids, functors, and natural transformations internal to abelian groups.  These functors and natural transformations will be referred to as \textbf{additive} functors and \textbf{additive} natural transformations.
\end{definition}
\begin{definition}
Let $\chain$ denote the following 2-category:
\begin{itemize}
\item Objects are two-term chain complexes of abelian groups $ [A\tof{d}B]$, with $A$ in degree $-1$ and $B$ in degree $0$. For convenience we refer to these as \textbf{short chain complexes}.
\item 1-Arrows are degree zero chain morphisms
 \[f=(f_A,f_B):[A\to B]\To[A'\to B']. \]
\item 2-Arrows are chain homotopies.  Thus for a pair of chain morphisms \linebreak$[A\tof{d} B]\tof{f,f'}[A'\tof{d'}B']$, a 2-arrow $\eta:f\Rightarrow f'$ is a degree $-1$ homomorphism satisfying $[d,\eta]:= d'\circ\eta+\eta\circ d = f'-f$.  This reduces to a homomorphism $\eta:B\to A'$ satisfying $d'\circ \eta=f'_B-f_B$ and $\eta\circ d= f'_A-f_A$.
\end{itemize}
\end{definition}
In what follows we will want to be comfortable using Picard groupoids and short chain complexes interchangeably as presentations of Picard stacks.  Thus let us carefully write down an equivalence of 2-categories between $\Pic$ and $\chain$.

\noindent\underline{Short complexes from Picard groupoids:}  From a Picard groupoid $G=(G_1\stackrel{s,t}{\arrows} G_0)$ one obtains a short chain complex
\[ \textbf{C}(G):=[\ker(s)\Tof{t|_{\ker s}}G_0]. \]
An additive functor $f:G\to G'$ gives rise to a chain morphism $\textbf{C}(f):\textbf{C}(G)\To \textbf{C}(G')$ whose components are $\textbf{C}(f):=(f|_{\ker(s)},f|_{G_0})$.
An additive natural transformation $\xymatrix{G\rtwocell^f_{f'}{_\eta} & G'}$
with underlying set function $\eta:G_0\to G'_1$ induces a function
\[ \textbf{C}(\eta):G_0\To \ker(s'),\quad b\longmapsto \eta(b)-s(\eta(b)) \]
which is easily verified to be a chain homotopy from $\textbf{C}(f)$ to $\textbf{C}(f')$.  Putting this all to together, we have a strict 2-functor
\[ \textbf{C}:\Pic\To \chain. \]

\noindent\underline{Picard groupoids from short complexes:} Given a complex $A\tof{d}B$, one has an action of $A$ on $B$ by $a\cdot b:=d(a)+b$.  Associate to this data the \textbf{action groupoid}, which is by definition
 \[ A\ltimes_d B= (A\times B\arrows B) \]
with source, range, and composition given by
\[ s(a,b)=b,\quad t(a,b)=da+b,\quad (a,d(a')+b)\circ (a',b):=(a+a',b). \]
The group structure on $A\times B$ (that is, $(a,b)+(a',b')=(a+a',b+b')$) is functorial, and makes $A\ltimes_d B$ into a Picard groupoid $\textbf{P}[A\tof{d} B]$.

A chain morphism $f:[A\to B]\To [A'\to B']$ induces an additive functor
\[ \textbf{P}(f):=f_A\times f_B:A\ltimes B\to A'\ltimes B'.\]
And finally a homotopy $\eta:f\Rightarrow f'$ between two such chain morphisms induces an additive natural transformation $\textbf{P}(\eta):\textbf{P}(f)\Rightarrow \textbf{P}(f')$ whose underlying function is
\[ (A\ltimes B)_0\to (A'\ltimes B')_1,\qquad b\longmapsto (\eta(b),f(b)). \]
(Note that $(\eta(b),f(b))$ is indeed an arrow from $f(b)$ to $f'(b)$.)
Putting this all together, we have a strict 2-functor $\textbf{P}:\chain\To \Pic$.
It is easy to see that $\textbf{C}(\textbf{P}([A\to B]))\equiv [A\to B]$.  Furthermore, for any Picard groupoid $G=(G_1\stackrel{s,t}{\arrows}G_0)$, we have an isomorphism
\[ G\Tof{\sim} \ker(s)\ltimes G_0\equiv \textbf{P}(\textbf{C}(G)),\qquad g\longmapsto (g-s(g),s(g)), \]
and this induces a pseudonatural isomorphism $\id_{\Pic}\Longrightarrow \textbf{P}\circ\textbf{C}$. This equivalence was pointed out in \cite{Del}, and we paraphrase it as:
\begin{proposition}\cite{Del} The strict 2-functors
\[ \textbf{P}:\chain\leftrightarrows \Pic:\textbf{C} \]
are equivalences of 2-categories, with $\textbf{C}\circ \textbf{P}=\id_{\chain}$ and $\id_{\Pic}\simeq \textbf{P}\circ\textbf{C}$.
\end{proposition}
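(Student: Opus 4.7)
The plan is to verify the two claims separately, checking each on objects, 1-arrows, and 2-arrows. Since the paper has already spelled out the actions of both $\textbf{P}$ and $\textbf{C}$ on all three levels of data, what remains is essentially a bookkeeping exercise: track how the definitions compose, and confirm that the subtractions $g-s(g)$ used to define the comparison map are valid group operations inside the Picard groupoid.

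First I would establish the strict equality $\textbf{C}\circ \textbf{P}=\id_{\chain}$. For a short complex $[A\tof{d} B]$, the Picard groupoid $\textbf{P}[A\tof{d}B]=A\ltimes_d B$ has source $s(a,b)=b$, so $\ker(s)=A\times\{0\}\cong A$, and the restricted target $t(a,0)=da$ recovers $d$; hence $\textbf{C}(\textbf{P}[A\tof{d}B])=[A\tof{d}B]$ on the nose. For a chain morphism $f=(f_A,f_B)$, the additive functor $\textbf{P}(f)=f_A\times f_B$ restricts on $\ker(s)$ to $f_A$ and on degree zero to $f_B$, so $\textbf{C}(\textbf{P}(f))=f$. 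For a chain homotopy $\eta:B\to A'$, the induced natural transformation $\textbf{P}(\eta)(b)=(\eta(b),f(b))$ satisfies $\textbf{C}(\textbf{P}(\eta))(b)=\textbf{P}(\eta)(b)-s(\textbf{P}(\eta)(b))=(\eta(b),0)\in\ker(s')$, which under the identification $\ker(s')\cong A'$ is exactly $\eta(b)$. All three checks are strict equalities.

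Next I would construct the pseudo-natural isomorphism $\id_{\Pic}\Rightarrow \textbf{P}\circ\textbf{C}$. For each Picard groupoid $G=(G_1\arrows G_0)$, define $\phi_G:G\to \ker(s)\ltimes G_0$ by $\phi_G(g):=(g-s(g),s(g))$, interpreting $s(g)$ inside $G_1$ via the unit arrow. One checks: (i) $g-s(g)\in\ker(s)$ since $s$ is a group homomorphism; (ii) $\phi_G$ sends the arrow $g:s(g)\to t(g)$ to the arrow $(g-s(g),s(g)):s(g)\to t(g-s(g))+s(g)=t(g)$, so sources and targets match; (iii) $\phi_G$ preserves groupoid composition, since for composable $g_1,g_2$ with $s(g_1)=t(g_2)$ the computation $\phi_G(g_1\circ g_2)=(g_1-s(g_1)+g_2-s(g_2),s(g_2))$ agrees with $\phi_G(g_1)\circ\phi_G(g_2)$ in the action groupoid; (iv) $\phi_G$ is additive because it is assembled from the abelian group operations of $G_1$ and $G_0$. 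The inverse functor sends $(a,b)\mapsto a+b$, and the two round-trips are strictly the identity. Finally, pseudo-naturality with respect to an additive functor $F:G\to G'$ reduces to checking that $\phi_{G'}\circ F$ and $\textbf{P}(\textbf{C}(F))\circ \phi_G$ agree; both send $g$ to $(F(g)-s'(F(g)),s'(F(g)))$ since $F$ commutes with $s,s'$ and the group operations, so in fact the pseudo-natural isomorphism is strict. Compatibility with 2-arrows is automatic from the same naturality.

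The only conceptual subtlety is in the asymmetry between the two statements: $\textbf{C}\circ\textbf{P}$ is strictly the identity because $\textbf{C}$ picks out the canonical action-groupoid presentation that $\textbf{P}$ produces, whereas $\textbf{P}\circ\textbf{C}$ only recovers $G$ after rigidifying it to its associated action groupoid via the splitting $g\mapsto(g-s(g),s(g))$. Once the splitting is written down, everything reduces to routine verifications using the group structures on $G_0,G_1$, and I do not expect a genuine obstacle — the main care needed is to keep track of when elements of $G_0$ are being regarded as identity arrows inside $G_1$.
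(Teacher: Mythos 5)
Your proposal is correct and follows essentially the same route as the paper, which proves the proposition by the constructions immediately preceding it: the observation that $\textbf{C}(\textbf{P}([A\to B]))\equiv[A\to B]$ on the nose, and the comparison isomorphism $g\mapsto(g-s(g),s(g))$ inducing $\id_{\Pic}\simeq\textbf{P}\circ\textbf{C}$. You merely flesh out the verifications the paper leaves implicit (and correctly note, slightly more than the paper claims, that the resulting pseudonatural isomorphism is in fact strict).
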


This purely algebraic equivalence upgrades without change when topology is introduced.  Thus if $\Sc$ is a site and $\textbf{Ab}_\Sc$ is the category of abelian groups internal to $\Sc$, then we have the 2-category $\Pic_\Sc$
of groupoids, functors, and natural transformations internal to $\textbf{Ab}_\Sc$, and it is equivalent to the 2-category $\textbf{Ch}_\Sc^{[0,1]}$ of short chain complexes $[A\to B]$ of abelian groups in $\Sc$.

For the constructions of Pontryagin duality and T-duality which follow, we will for the most part use Picard groupoids and short complexes.  However, some proofs require the use of Picard stacks, so we will recall the necessary terminology here.

\begin{definition} A \textbf{Picard $\Sc$-stack} is a $\Pic$-valued $\Sc$-stack. The 2-category $\Sc-\textbf{PicStacks}$ of Picard $\Sc$-stacks is the 2-category of $\Pic$-valued $\Sc$-stacks. \end{definition}
An equivalent but less succinct definition is that a Picard stack is a groupoid-valued stack $\Xs$ together with a morphism $\Xs\times\Xs\tof{+} \Xs$ whose value at each point $T\in\Sc$:
\[ \Xs(T)\times\Xs(T)\tof{+}\Xs(T) \]
makes a $\Xs(T)$ a Picard groupoid.
\begin{example}If $A\ltimes B\in\Pic_\Sc$, then the 2-functor
\[ [B//A]:\Sc^{op}\To \Pic \]
which sends $T\in \Sc$ to the Picard groupoid of right principal $(A\ltimes B)$-bundles (or $A\ltimes B$-torsors) over $T$ is a Picard stack.  The group law of $[B//A]$ is of course the one induced by addition in $A\ltimes B$.
\end{example}
Propositions \eqref{P:P:PicardMoritaEquivs} and \eqref{P:Stack Morphisms are groups} below show that equivalences of Picard stacks may be phrased in terms of exact sequences of groups in $\Sc$.  This will be the crucial fact when extending Pontryagin duality to Picard stacks.
\begin{remark} A sequence of groups $0\to A\tof{\alpha} B\tof{\beta} C\to 0$ in a site $\Sc$ will be called \textbf{exact} only when the associated sequence of sheaves of groups is exact\footnote{We will assume all sites are subcanonical, and tacitly identify an object $X\in \Sc$ with the sheaf $\Hom_\Sc(-,X)$ that it determines.}.  This means that $\alpha$ is an isomorphism onto its image, and $\beta$ is an \textbf{epimorphism}, that is, $\beta$ admits local sections.  Also, a chain morphism $C^\bullet\to D^\bullet$ between chain complexes in $\Sc$ is called a \textbf{quasi-isomorphism} when the associated map of complexes of sheaves is a quasi-isomorphism.
\end{remark}

\begin{proposition}\label{P:P:PicardMoritaEquivs} Let $A\ltimes B\Tof{\phi}A'\ltimes B'$ be a morphism of Picard $\Sc$-groupoids.  Then the following are equivalent:
\begin{enumerate}
\item $\phi$ induces an isomorphism of Picard stacks.
\item $\phi$ is a Morita equivalence.
\item The sequence \[0\to A\tof{i} P_\phi\tof{\epsilon} B'\to 0\]
is exact, where $i(a)=(da,\phi(a),0)\in P_\phi=B\times_{\phi,B',t}(A\times B)$, and $\epsilon(a,a',b')=b'$.
\item The sequence
\[ 0\to A\tof{\alpha} B\times A'\tof{\beta} B'\to 0, \]
is exact, where $\alpha(a):=(da,\phi(a))$ and $\beta(b,a'):=\phi(b)-d'a'$.
\item The chain morphism $[A\to B]\Tof{\b{C}(\phi)}[A'\to B']$ is a quasi-isomorphism.

\end{enumerate}
\end{proposition}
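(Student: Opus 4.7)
My plan is to establish a cycle of equivalences among the five conditions. First I would handle $(3)\Leftrightarrow(4)$ as an essentially notational observation: the defining equation $\phi(b) = d'a' + b'$ of $P_\phi$ determines $b'$ uniquely in terms of $(b,a')$, so projection yields a natural isomorphism $P_\phi \isom B \times A'$ with inverse $(b,a') \mapsto (b, a', \phi(b) - d'a')$. Under this identification $i$ becomes $\alpha$ and $\epsilon$ becomes $\beta$, so the two sequences literally coincide.

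For $(4)\Leftrightarrow(5)$, I would recognize the three-term complex $[A \tof{\alpha} B\times A' \tof{\beta} B']$ as (up to signs) the mapping cone of the chain map $\textbf{C}(\phi)$. A direct diagram chase then shows: $\alpha$ is a monomorphism iff $\textbf{C}(\phi)$ is injective on $H^{-1}$; $\beta$ is an epimorphism iff $\textbf{C}(\phi)$ is surjective on $H^0$; and exactness at $B\times A'$ encodes the complementary surjectivity on $H^{-1}$ and injectivity on $H^0$. Hence the sequence is exact iff $\textbf{C}(\phi)$ is a quasi-isomorphism.

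For $(2)\Leftrightarrow(4)$, I would invoke the characterization of Morita equivalence by (internal) essential surjectivity and full faithfulness. Unpacking this for the functor $\phi: A\ltimes B \to A'\ltimes B'$ between action groupoids, full faithfulness asserts that the map $A\times B \to B \times_{B'\times B'}(A'\times B')$ sending $(a,b)\mapsto (b, da+b, \phi(a))$ is an isomorphism, which after the substitution $b_0 := b_2 - b_1$ becomes exactly the assertion that $\alpha$ is an isomorphism onto $\ker\beta$. Essential surjectivity asserts that $P_\phi \to B'$, $(b,a',b')\mapsto b'$, admits local sections, which under $P_\phi \isom B\times A'$ is exactly the epimorphism condition on $\beta$. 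Together these give exactness of the sequence in (4). Finally, $(1)\Leftrightarrow(2)$ is the standard statement that a morphism of groupoid presentations induces an equivalence of associated stacks iff it is a Morita equivalence.

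The most delicate step will be $(2)\Leftrightarrow(4)$: one must interpret essential surjectivity and full faithfulness \emph{internally} in the site $\Sc$, so that ``epimorphism'' means admitting local sections rather than pointwise surjection, in order to land on the notion of exactness of sheaves fixed in the preceding remark. Everything else is either diagram chasing or invocation of a general fact about internal groupoids versus stacks.
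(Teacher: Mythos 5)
Your proposal is correct, and on two of its three legs it coincides with the paper's own proof: the identification $P_\phi \isom B\times A'$ with inverse $(b,a')\mapsto(b,a',\phi(b)-d'a')$ is exactly how the paper establishes $(3)\Leftrightarrow(4)$, and the paper likewise obtains $(4)\Leftrightarrow(5)$ by observing that the three-term sequence is the mapping cone of $\textbf{C}(\phi)$ (it simply invokes the standard fact that a cone is acyclic iff the map is a quasi-isomorphism, where you run the homology chase explicitly; both are fine, provided, as you correctly flag, exactness and epimorphisms are read sheaf-theoretically per the paper's standing convention). The genuine divergence is the bridge to Morita equivalence. The paper proves $(2)\Leftrightarrow(3)$: it views $P_\phi$ as the canonical right principal $(A\ltimes B)$--$(A'\ltimes B')$-bimodule attached to the functor $\phi$, and observes that exactness of $0\to A\to P_\phi\to B'\to 0$ says precisely that the left $(A\ltimes B)$-action on $P_\phi$ is principal as well, left-and-right principality being the bimodule definition of Morita equivalence. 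You instead prove $(2)\Leftrightarrow(4)$ directly from the internal ``fully faithful plus essentially surjective'' characterization of weak equivalences, translating full faithfulness (via the substitution $b_0=b_2-b_1$, which legitimately uses the abelian group structure) into ``$\alpha$ is an isomorphism onto $\ker\beta$'' and essential surjectivity into ``$\beta$ admits local sections.'' The two characterizations of Morita equivalence are standard and equivalent, so your route is sound; what the paper's choice buys is a conceptual reading of sequence (3) as the principality condition for the very bimodule $P_\phi$ that reappears later in Proposition \eqref{P:Stack Morphisms are groups}, while yours is more elementary and lands on (4) without passing through (3), recovering (3) afterwards from the isomorphism of sequences. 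Two minor repairs: the fibered product in your full-faithfulness condition should be $(B\times B)\times_{B'\times B'}(A'\times B')$ (or $(B\times B)\times_{B'\times B'}A'$ after eliminating the redundant $b'$), not $B\times_{B'\times B'}(A'\times B')$; and the equivalence of Morita equivalence with the internal weak-equivalence characterization is itself a lemma in this internal setting that should be cited alongside your $(1)\Leftrightarrow(2)$ step---it is covered by the same references, e.g.\ \cite{Met}, \cite{BehXu}.
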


\begin{proof}(1)$\Leftrightarrow$(2) can be found in any introduction to stacks (see e.g. \cite{Met}, \cite{BehXu}).  To obtain (2)$\Leftrightarrow$(3), note that $P_\phi$ is the right principal $(A\ltimes B)-(A'\ltimes B')$-bimodule associated to the functor $\phi$.  It is easy to see that the left groupoid action
\[ (A\ltimes B)\times_B P_\phi\To P_\phi \]
is principal as well precisely when (3) is satisfied.  And being left principal is the same as being a Morita equivalence.

The sequences in (3) and (4) are isomorphic, indeed
\[ P_\phi=B\times_{B'}A'\times B'\To B\times A',\qquad (b,a',b')\mapsto (b,a') \]
induces the isomorphism, and the inverse is $(b,a')\mapsto(b,a',\phi(b)-d'a')$.

The mapping cone complex associated to (5) is the sequence (4), thus sequence (4) is exact if and only if (5) is a quasi-isomorphism.
\end{proof}

It is proved in \cite{Del} that \emph{every} Picard $\Sc$-stack is presentable by a short chain complex $[A\to B]$, however, $A$ and $B$ are in general not groups in $\Sc$, but sheaves of groups on $\Sc$.  For our purposes it will be important to always have actual objects of $\Sc$ presenting stacks, and also to have objects of $\Sc$ presenting morphisms between stacks. In Proposition \eqref{P:Stack Morphisms are groups} below, we show that if two stacks are representable, then the morphisms between them may be represented by groups.

A Picard $\Sc$-stack $\Xs$ is called \textbf{representable} if there is a group $B\in \Sc$ and an additive representable epimorphism $B\to \Xs$, that is, if for every space $X\in \Sc$ and morphism $X\to \Xs$, the fiber product $X\times_\Xs B$ is (the sheaf associated to) an object of $\Sc$, and the pullback map $X\times_\Xs B\to X$ is an epimorphism.

Thus in particular, if $\Xs$ is representable then $G:=B\times_\Xs B$ is a space.  Furthermore, $G$ is endowed in a canonical way with the structure of a Picard groupoid in $\Sc$, whose associated stack is (isomorphic to) $\Xs$.  The groupoid $G$ or its associated short complex, will be referred to as a \textbf{presentation} of $\Xs$.

\begin{proposition}\label{P:Stack Morphisms are groups} Suppose that $\Xs\simeq[B//A]$ and $\Xs'\simeq[B'//A']$ are representable Picard $\Sc$-stacks, and $\Xs\tof{\phi} \Xs'$ is a homomorphism. Then
\begin{enumerate}
\item The right principal $(A\ltimes B)-(A'\ltimes B')$-bimodule $P:=B\times_{\{\phi\circ j,\Xs',j'\}} B'$ is an abelian group in $\Sc$.  Here $B\tof{j}\Xs$ and $B'\tof{j'}\Xs'$ are the canonical maps.
\item There are homomorphisms $A\to P$ and $A'\to P$ such that the resulting action groupoid $A\times A'\ltimes P$ fits into a zig-zag of functors
\[ (A\ltimes B)\stackrel{M. equiv.}{\longleftarrow}(A\times A'\ltimes P)\To (A'\ltimes B') \]
which induces the morphism $\Xs\tof{\phi}\Xs'$.  The leftward arrow is a Morita equivalence, and the rightward arrow is as well if and only if $\phi$ is an isomorphism.
\item The morphism $\Xs\tof{\phi}\Xs'$ is induced by a zig-zag of chain morphisms
\[ [A\to B]\stackrel{quasi-isom}{\longleftarrow} [A\times A'\to P]\To [A'\to B'] .\]
The leftward arrow is a quasi-isomorphism, and the rightward arrow is as well if and only if $\phi$ is an isomorphism.
\end{enumerate}
\end{proposition}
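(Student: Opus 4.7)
The strategy is to construct $P$ as the 2-fiber product $B\times_{\phi\circ j,\,\Xs',\,j'}B'$, recognize it as the bibundle presenting $\phi$, and re-present it as an action groupoid $H:=(A\times A')\ltimes P$ admitting functors to both $A\ltimes B$ and $A'\ltimes B'$. For (1), a $T$-point of $P$ is a triple $(b,b',\alpha)$ with $b\in B(T)$, $b'\in B'(T)$, and $\alpha:\phi(j(b))\stackrel{\sim}{\to}j'(b')$ in $\Xs'(T)$. Representability of $\Xs'$ makes $j'$ a representable epimorphism, so its pullback $\pi_1:P\to B$ is a representable epimorphism as well; hence $P\in\Sc$. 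Since $\phi$, $j$, $j'$ are additive and $\Xs'$ is a Picard stack, the componentwise sum on triples (using the Picard sum on $\Xs'$ to add the $\alpha$'s) is well-defined and makes $P$ an abelian group in $\Sc$.

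For (2), I would define the homomorphisms
\[
\iota_A:A\to P,\ \ a\mapsto(da,\,0,\,\phi_*(a)^{-1}), \qquad \iota_{A'}:A'\to P,\ \ a'\mapsto(0,\,d'a',\,(a',0)),
\]
where $\phi_*(a)$ denotes the image under $\phi$ of the canonical arrow $(a,0):0\to da$ in $A\ltimes B$, and $(a',0):0\to d'a'$ is the analogous arrow in $A'\ltimes B'$. Both are additive by compatibility of arrow composition with the Picard sum in $\Xs'$. Letting $A\times A'$ act on $P$ by translation through $\iota_A+\iota_{A'}$, I would define the functors $F:H\to A\ltimes B$ and $F':H\to A'\ltimes B'$ each by the corresponding projection $\pi_k$ on objects, and by $((a,a'),p)\mapsto(a,\pi_1(p))$, respectively $((a,a'),p)\mapsto(a',\pi_2(p))$, on arrows. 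These are well-defined because $\iota_A$ shifts $\pi_1$ by $da$ and fixes $\pi_2$, while $\iota_{A'}$ shifts $\pi_2$ by $d'a'$ and fixes $\pi_1$; the zig-zag then induces $\phi$ by construction.

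To verify that $F$ is a Morita equivalence I would apply criterion (4) of Proposition~\eqref{P:P:PicardMoritaEquivs} to the sequence
\[
0\to A\times A'\to P\times A\to B\to 0,\quad(a,a')\mapsto(\iota_A(a)+\iota_{A'}(a'),\,a),\quad(p,a)\mapsto\pi_1(p)-da.
\]
Surjectivity is immediate since $\pi_1$ is an epimorphism, and exactness in the middle reduces to the fact that $\iota_{A'}$ bijects $A'$ onto $\pi_1^{-1}(0)\subset P$, which holds because arrows $0\to j'(b')$ in $\Xs'$ are parametrized by $(d')^{-1}(b')\subset A'$. The analogous argument applied to $F'$ yields exactness of $0\to A\times A'\to P\times A'\to B'\to 0$ precisely when $\pi_2$ is an epimorphism and $\iota_A$ bijects $A$ onto $\pi_2^{-1}(0)$, i.e., exactly when $\phi$ is essentially surjective and fully faithful, i.e., an equivalence of Picard stacks. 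Part (3) is then obtained by transcribing the zig-zag under $\textbf{C}:\Pic\to\chain$ and invoking the equivalence (2)$\Leftrightarrow$(5) of Proposition~\eqref{P:P:PicardMoritaEquivs}.

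The main obstacle I expect is Part (1): rigorously verifying that the 2-fiber product $P$ lives in $\Sc$ rather than merely as a sheaf of sets and carries a canonical abelian group structure there, since this is where the representability hypotheses on $\Xs$ and $\Xs'$ and the compatibility of the Picard structure with the fiber product have to be used in an essential way.
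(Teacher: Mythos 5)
Your proposal is correct and follows essentially the same route as the paper: you form $P=B\times_{\Xs'}B'$, extract its abelian group structure from the $T$-point description of the fiber product, build the action groupoid $(A\times A')\ltimes P$ with the two projection functors, and decide which leg is a Morita equivalence by appealing to Proposition \eqref{P:P:PicardMoritaEquivs}. The only difference is cosmetic: where the paper defines the maps $A\to P$ and $A'\to P$ abstractly by acting on the identity object $0_P$ and invokes principality of the bimodule actions (condition (3)), you write explicit formulas for $\iota_A$, $\iota_{A'}$ and check the equivalent exact-sequence condition (4) directly.
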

\begin{proof}
For any test space $T\in \Sc$, we have by definition of the fibered product,
\[ B\times_{\{\phi\circ j,\Xs',j'\}}B'(T)=B(T)\times_{\Xs(T)_0}\Xs(T)_1\times_{\Xs(T)_0}B'(T), \]
which is an abelian group since $B$, $\Xs$, and $B'$ are, and $\phi\circ j$ and $j'$ are homomorphisms. Thus (1) holds.

To prove (2), first note that the left groupoid action $(A\ltimes B)\times_B P\tof{\mu} P$ acting on the identity object $0_P\in P$ induces a homomorphism
\[ A\Tof{d} P,\qquad da:= \mu(a,0_P), \]
and similarly a homomorphism $A'\tof{d'} P$.  Thus we can form an action groupoid $A\times A'\ltimes P$, and the canonical maps $P\to B$ and $P\to B'$ induce obvious functors
\[  (A\ltimes B)\lTo(A\times A'\ltimes P)\To (A'\ltimes B'). \]
The leftward functor is a Morita equivalence precisely because the right action of $A'\times B'$ on $P$ is principal. At stack level, the left facing arrow becomes an isomorphism, and the right facing arrow becomes $\phi$.   The rightward functor is an equivalence if and only if the $(A\ltimes B)$-action on $P$ is principal, that is, if and only if $\phi$ is an isomorphism.  Thus (2) is proved.

Statement (3) is simply a restatement of (2) in terms of short chain complexes.
\end{proof}
\section{Pontryagin duality for Picard stacks}\label{S:Pontryagin Duality}

Recall that for a topological group $A$, the Pontryagin dual is the group of homomorphisms
\[ \wh{A}=\Hom(A,U(1)),\]
equipped with its compact-open topology.  It is a classical theorem that if $A$ is a locally compact Hausdorff abelian group, then so is the dual $\wh{A}$, and the canonical evaluation map gives an isomorphism:
\[ ev:A\Tof{\sim} \wh{\wh{A}},\qquad ev(a)(\phi):=\phi(a)\equiv\langle\phi,a\rangle. \]
Thus Pontryagin duality is a contravariant auto-equivalence of the category of locally compact Hausdorff (LCH) abelian groups.  Furthermore, if a sequence
\[ 0\to A\tof{\alpha} B\tof{\beta}C\to 0 \]
of LCH abelian groups is topologically exact (meaning it is algebraically exact, and $\alpha:A\to\alpha(A)$ is a homeomorphism, and $\beta$ is a topological quotient map), then so is the Pontryagin dual sequence\footnote{A sequence exhibiting the necessity of \emph{topological} exactness is
\[ 0\to \Zb\tof{\theta}\Rb/\Zb\to (\Rb/\Zb)/\theta(\Zb)\to 0, \]
where $\theta$ is multiplication by an irrational number.  Since the image of $\theta$ is dense, it cannot be an isomorphism onto its image.  This sequence is algebraically exact but its Pontryagin dual is not.}.

Now we proceed towards the extension of Pontryagin duality to Picard stacks.  For this we must define the correct site:
\begin{definition} Let $\LCH$ denote the site whose underlying category is LCH and whose covers are the surjective open maps. \end{definition}

As a first step, we extend Pontryagin duality to Picard groupoids:

\begin{proposition} Pontryagin duality extends to an arrow reversing automorphism of the 2-category $\Pic_{\LCH}$ of locally compact Hausdorff Picard groupoids.  The dual is given by:
\[ A\ltimes_d B\rightsquigarrow \wh{B}\ltimes_{\wh{d}}\wh{A}. \]
\end{proposition}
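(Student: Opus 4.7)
The plan is to lift the classical Pontryagin duality, already known to be a contravariant auto-equivalence on the category of LCH abelian groups, to the 2-category level using the Deligne equivalence $\chain_{\LCH}\simeq \Pic_{\LCH}$ recalled in the previous section. Since short chain complexes and Picard groupoids are interchangeable, it suffices to define an arrow-reversing 2-functor on $\textbf{Ch}_{\LCH}^{[-1,0]}$ and then transport it across $\textbf{P}$ and $\textbf{C}$.

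The proposed 2-functor $(\,\wh{\cdot}\,)$ on chain complexes sends:
\begin{itemize}
\item an object $[A\tof{d}B]$ to $[\wh B\tof{\wh d}\wh A]$;
\item a chain map $(f_A,f_B)\colon[A\to B]\to[A'\to B']$ to the chain map $(\wh{f_B},\wh{f_A})\colon[\wh{B'}\to\wh{A'}]\to[\wh B\to\wh A]$, which commutes with the differentials because $f_B\circ d=d'\circ f_A$ dualizes to $\wh d\circ\wh{f_B}=\wh{f_A}\circ\wh{d'}$;
\item a homotopy $\eta\colon B\to A'$ from $f$ to $f'$ to the dual map $\wh\eta\colon\wh{A'}\to\wh B$, which is straightforwardly a homotopy from $\wh{f'}$ to $\wh f$ by dualizing the two identities defining $\eta$.
\end{itemize}
Translating back through $\textbf{P}$ produces exactly the assignment $A\ltimes_d B\rightsquigarrow \wh B\ltimes_{\wh d}\wh A$ on objects. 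Functoriality in each direction (composition of 1-arrows, vertical and horizontal composition of 2-arrows, and identity preservation) is inherited from the functoriality of classical Pontryagin duality on LCH groups and straightforward matching of formulas, so these checks are purely formal.

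To upgrade this to a 2-automorphism, I would invoke classical Pontryagin duality: the evaluation maps $\ev_A\colon A\tof{\sim}\wh{\wh A}$ and $\ev_B\colon B\tof{\sim}\wh{\wh B}$ are isomorphisms in LCH and fit into a commutative square with $d$ and $\wh{\wh d}$. These assemble into a chain isomorphism $[A\to B]\Tof{\sim}[\wh{\wh A}\to\wh{\wh B}]$, natural in the chain complex, and the strict naturality under chain maps and homotopies follows from the naturality of $\ev$ in the classical setting. This exhibits the required pseudo-natural isomorphism $\id\Rightarrow(\wh{\cdot})\circ(\wh{\cdot})$, so $(\wh{\cdot})$ is an equivalence of 2-categories which reverses 1-arrows.

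The one genuine point that needs care, and which I would flag as the main obstacle, is that all the constructions must be internal to $\LCH$. That is, one must know that for a continuous homomorphism $d\colon A\to B$ of LCH abelian groups the dual map $\wh d\colon\wh B\to\wh A$ is again continuous in the compact-open topology, that $\wh A$ is itself LCH, and that the evaluation map $\ev_A$ is a homeomorphism onto $\wh{\wh A}$, again equipped with the compact-open topology. All of these are standard consequences of the classical Pontryagin duality theorem for LCH abelian groups, so once this topological input is imported, the remaining verifications are the diagram chases sketched above.
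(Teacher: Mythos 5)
Your proposal is correct and takes essentially the same route as the paper: the paper also works through the short-complex presentation, dualizing the commuting square of a chain morphism $(f_A,f_B)$ and sending a homotopy $B\to A'$ to $\wh{A'}\to\wh{B}$, and then observes that 2-functoriality plus classical Pontryagin duality makes the operation automatically an automorphism. The only difference is one of emphasis --- you spell out the evaluation maps $\ev_A$, $\ev_B$ and the topological inputs (LCH-ness of $\wh{A}$, continuity of $\wh{d}$) that the paper leaves implicit as ``automatic.''
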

\begin{proof}
We only need to check that this operation is 2-functorial.  After that it is automatically an automorphism.  This takes little effort once the description of Picard groupoids by short complexes is in hand.  Indeed, to see that the Pontryagin dual of a morphism $A\ltimes B\tof{f}A'\ltimes B'$ is again an additive functor, simply look at the diagram:
\[\xymatrix{ A\ar[r]^d\ar[d]_{f_A} & B\ar[d]^{f_B} \\ A'\ar[r]^{d'}& B'}\qquad\rightsquigarrow\qquad
\xymatrix{ \wh{B}\ar[r]^{\wh{d}} & \wh{A} \\ \wh{B'}\ar[r]^{\wh{d'}}\ar[u]_{\wh{f_B}}& \wh{A'}\ar[u]_{\wh{f_A}}}. \]
Similarly, to see that additive transformations dualize to additive transformations, observe that a chain homotopy $B\to A'$ is sent to a chain homotopy $\wh{A'}\to\wh{B}$.
\end{proof}
Note that $\wh{B}\ltimes\wh{A}$ may be identified with the groupoid whose objects are the continuous additive morphisms from $(A\ltimes B)$ to $(U(1)\ltimes *)$, and whose arrows are the continuous additive natural transformations between such morphisms.

Now to see that Pontryagin duality for Picard groupoids descends to stacks, we will need to understand its effect on Morita equivalences.  And for this we need a quick lemma:
\begin{lemma} A sequence of topological groups
\[ 0\to A\tof{\alpha} B\tof{\beta}C\to 0 \]
is topologically exact if and only if it is exact as a sequence of sheaves in the site $\LCH$.
\end{lemma}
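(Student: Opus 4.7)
The plan is to unpack what exactness of a sequence of sheaves on $\LCH$ means, condition by condition, and line it up with topological exactness. Exactness at $A$ amounts to $\alpha$ being a sheaf monomorphism, which by Yoneda on $\LCH$ reduces to $\alpha$ being injective as a map of sets. Exactness at $B$ says that the subsheaf $\alpha(A)\subset B$ coincides with the subsheaf $\ker(\beta)\subset B$; since $A$ and $\ker(\beta)$ (with its subspace topology from $B$) are both representable and the Yoneda embedding on $\LCH$ is fully faithful, this identification of subsheaves translates to $\alpha:A\To\ker(\beta)$ being an isomorphism of topological groups. Thus exactness at $A$ and at $B$ together is precisely algebraic exactness plus $\alpha$ being a homeomorphism onto its image.

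It remains to match exactness at $C$ (i.e.\ $\beta$ being a sheaf epimorphism) with $\beta$ being a topological quotient homomorphism; recall that for a homomorphism of topological groups this latter condition is equivalent to being a surjective open map, because $\beta^{-1}(\beta(V))=V+\ker(\beta)$ is open whenever $V\subset B$ is. For the forward implication, assume $\beta$ is open and surjective and let $s:T\To C$ be any map from $T\in\LCH$. Form $U:=T\times_C B$; because $C$ is Hausdorff, $U$ is a closed subspace of the LCH space $T\times B$, hence lies in $\LCH$. The projection $U\To T$ is the pullback of $\beta$ along $s$, and open surjections of topological spaces are stable under pullback, so $U\To T$ is again a surjective open map, i.e.\ a cover in $\LCH$. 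The second projection $U\To B$ provides the required lift of $s$, showing that $\beta$ is a sheaf epimorphism.

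For the reverse implication, apply the sheaf-epimorphism property to the identity $\id_C:C\To C$: there must exist a surjective open map $U\To C$ together with a continuous lift $\sigma:U\To B$ satisfying $\beta\circ\sigma=(U\To C)$. Since every surjective open map is a topological quotient map, and since quotient maps obey the cancellation law that if $g\circ h$ is a quotient map then so is $g$, the factorization $U\Tof{\sigma}B\Tof{\beta}C$ forces $\beta$ to be a quotient map. Combined with the group-theoretic upgrade noted above, $\beta$ is then automatically open; surjectivity is immediate from the existence of the lift $\sigma$.

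The only step that requires real argument, as opposed to translation between definitions, is the reverse implication in the previous paragraph: one might hope to apply the sheaf-epi hypothesis to $\id_C$ and simply recognize the pullback $C\times_C B=B\To C$ as the cover, deducing openness of $\beta$ directly. But the sheaf-epi condition only guarantees the existence of \emph{some} surjective open cover that factors through $B$, not the pullback specifically. The quotient-map cancellation lemma is exactly what bridges this gap; the rest is routine unwinding of sheaf-theoretic definitions in the representable setting.
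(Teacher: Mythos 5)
Your proof is correct, and for the epimorphism half it follows the same skeleton as the paper's: the key observation that a quotient homomorphism of topological groups is automatically open (via $\beta^{-1}(\beta(V))=V+\ker\beta$), the forward direction by exhibiting $\beta$ as a cover, and the reverse direction by applying the sheaf-epimorphism condition to $\id_C$ to obtain a cover $j=\beta\circ s$ and concluding that $\beta$ is a quotient map. The differences are ones of completeness, and they are in your favor. Where the paper says $\beta$ open ``thus trivially induces an epimorphism of sheaves,'' you supply the two facts this actually needs in $\LCH$: that the test pullback $T\times_C B$ is again an object of the site (it is closed in $T\times B$ because $C$ is Hausdorff) and that open surjections are stable under pullback; and where the paper asserts ``$j$ is open so $\beta$ must be as well,'' you supply the quotient-map cancellation lemma ($g\circ h$ a quotient map forces $g$ to be one), which is precisely the missing step. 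Most notably, for exactness at $A$ and $B$ the paper offers only the one-line claim that $\alpha$ is an isomorphism onto its image if and only if the sheaf map is injective --- which, read in isolation, is false: the paper's own footnote example $\Zb\tof{\theta}\Rb/\Zb$ is a sheaf monomorphism in $\LCH$ whose source is not homeomorphic to its image. Your argument correctly routes the topological statement through exactness at $B$, identifying the image sheaf (representable by $A$, since $\alpha$ is a monomorphism and the presheaf image is then already a sheaf) with the representable subsheaf $\Hom(-,\ker\beta)$, and extracting the homeomorphism $A\simeq\ker\beta$ by Yoneda, or equivalently by evaluating at $T=\ker\beta$ and lifting the identity; the only detail worth adding is that $\ker\beta$ is closed in $B$ (again by Hausdorffness of $C$), hence genuinely an object of $\LCH$.
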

\begin{proof} It is immediate that $\alpha$ is an isomorphism to its image if and only if the map of sheaves is injective.  Next we must show that $\beta$ is a topological quotient map if and only if it induces an epimorphism of sheaves.  For this, first observe that $\beta$, being both a quotient map and a homomorphism, is itself open (indeed, if $U\subset B$ is open, then so is $\beta^{-1}\beta(U)=\cup_{a\in \ker\beta}Ua$, and then since $C$ has the quotient topology $\beta(U)$ is open as well).  Thus $\beta$ is itself a cover in $\Sc$, thus trivially induces an epimorphism of sheaves.  On the other hand, if $\beta$ induces an epimorphism of sheaves, then choose a cover $\wt{C}\tof{j}C$ over which $\beta$ admits a section $B\stackrel{s}{\leftarrow}\wt{C}$.  Then $j=\beta\circ s$, and $j$ is open so $\beta$ must be as well.  Thus $\beta$ is open and surjective, and thus in particular a topological quotient map.
\end{proof}

\begin{proposition}\label{P:Pont of Morita is Morita} A morphism $A\ltimes B\tof{\phi} A'\ltimes B'$ in $\Pic_{\LCH}$ is a Morita equivalence if and only if the dual morphism $\wh{B'}\ltimes\wh{A'}\tof{\phi^*}\wh{B}\ltimes\wh{A}$ is a Morita equivalence.
\end{proposition}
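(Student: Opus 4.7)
The plan is to reduce the statement to an assertion about topological exactness of a 4-term sequence, and then appeal to the classical fact that Pontryagin duality preserves topologically exact sequences of LCH abelian groups.

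First I would invoke Proposition \ref{P:P:PicardMoritaEquivs}(4) applied to $\phi$, which says that $\phi$ is a Morita equivalence if and only if the sequence
\[ 0\to A\tof{\alpha} B\times A'\tof{\beta} B'\to 0,\qquad \alpha(a)=(da,f_A(a)),\ \beta(b,a')=f_B(b)-d'(a'), \]
is exact in $\LCH$. By the Lemma immediately preceding this proposition, exactness in the site $\LCH$ is equivalent to topological exactness. Then I apply the classical theorem for LCH abelian groups: Pontryagin duality takes topologically exact sequences to topologically exact sequences. This yields the exact sequence
\[ 0\to \wh{B'}\tof{\wh\beta} \wh{B}\times \wh{A'}\tof{\wh\alpha} \wh{A}\to 0, \]
with $\wh\beta(\chi)=(\wh{f_B}(\chi),-\wh{d'}(\chi))$ and $\wh\alpha(\xi,\eta)=\wh{d}(\xi)+\wh{f_A}(\eta)$.

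Next I would match this dualized sequence up with the Morita-equivalence sequence of $\phi^{\ast}$ predicted by Proposition \ref{P:P:PicardMoritaEquivs}(4). The morphism $\phi^{\ast}:\wh{B'}\ltimes_{\wh{d'}}\wh{A'}\to \wh{B}\ltimes_{\wh{d}}\wh{A}$ has components $(\wh{f_B},\wh{f_A})$, so its associated 4-term sequence reads
\[ 0\to \wh{B'}\tof{\alpha^{\ast}} \wh{A'}\times \wh{B}\tof{\beta^{\ast}} \wh{A}\to 0, \]
with $\alpha^{\ast}(\chi)=(\wh{d'}(\chi),\wh{f_B}(\chi))$ and $\beta^{\ast}(\eta,\xi)=\wh{f_A}(\eta)-\wh{d}(\xi)$. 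This differs from the dualized sequence only by swapping the two factors of the middle term and flipping a sign on $\wh{d'}$, both of which are topological isomorphisms of LCH abelian groups; so one sequence is topologically exact if and only if the other is.

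Combining these observations gives the chain of equivalences
\[ \phi\ \text{Morita}\ \Longleftrightarrow\ \text{first sequence exact}\ \Longleftrightarrow\ \text{dualized sequence exact}\ \Longleftrightarrow\ \phi^{\ast}\ \text{Morita}, \]
proving the proposition. The only real content to verify is the identification of the dualized $(\alpha,\beta)$-sequence with the $(\alpha^{\ast},\beta^{\ast})$-sequence coming from $\phi^{\ast}$; this is a short bookkeeping check with signs and coordinate swaps and constitutes the minor technical step. All other ingredients are supplied by results already in the excerpt: Proposition \ref{P:P:PicardMoritaEquivs}, the preceding lemma translating topological exactness into sheaf exactness in $\LCH$, and the classical Pontryagin duality theorem for LCH abelian groups cited at the start of this section.
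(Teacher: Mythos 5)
Your proof is correct and follows exactly the paper's argument: both reduce Morita equivalence to topological exactness of the four-term sequence from Proposition \ref{P:P:PicardMoritaEquivs}(4), dualize using the classical exactness-preservation of Pontryagin duality, and identify the result with the sequence for $\phi^*$. You merely spell out the factor-swap and sign bookkeeping that the paper compresses into a single sentence, which is a welcome addition rather than a deviation.
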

\begin{proof} According to Proposition \eqref{P:P:PicardMoritaEquivs}, $A\ltimes B\tof{\phi} A'\ltimes B'$ is a Morita equivalence if and only if a sequence of the form
\[ 0\to A\tof{\alpha} B\times A'\tof{\beta} B'\to 0 \]
is topologically exact. Dualizing this sequence results in another exact sequence which expresses the fact that $\wh{B'}\ltimes\wh{A'}\tof{\phi^*}\wh{B}\ltimes\wh{A}$ is a Morita equivalence.
\end{proof}

\begin{theorem} Pontryagin duality for Picard groupoids descends to representable Picard $\LCH$-stacks. \end{theorem}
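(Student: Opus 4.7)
The plan is to lift Pontryagin duality groupoid-by-groupoid along the presentation correspondence and then check that the construction depends only on the underlying stack. For a representable Picard $\LCH$-stack $\Xs\simeq [B//A]$ with Picard groupoid presentation $G=A\ltimes_d B$, I would define $\wh{\Xs}$ to be the stack associated with the dual Picard groupoid $\wh{G}=\wh{B}\ltimes_{\wh d}\wh{A}$. Since the Pontryagin dual of an LCH abelian group is again LCH, $\wh{G}$ lies in $\Pic_{\LCH}$, and thus $\wh{\Xs}$ is again a representable Picard $\LCH$-stack, presented by $\wh{G}$.

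To see that this assignment descends to isomorphism classes of stacks, suppose $G=A\ltimes B$ and $G'=A'\ltimes B'$ are two presentations of the same stack $\Xs$. Applying Proposition~\eqref{P:Stack Morphisms are groups} to $\id_\Xs$ produces a zigzag $G \lTof{\sim} H \Tof{\sim} G'$ of Picard groupoids in $\LCH$ in which both arrows are Morita equivalences. Applying Pontryagin duality yields $\wh{G}\To \wh{H} \lTo \wh{G'}$, and Proposition~\eqref{P:Pont of Morita is Morita} guarantees that both of these remain Morita equivalences. Hence the dual stacks $[\wh{A}//\wh{B}]$ and $[\wh{A'}//\wh{B'}]$ are canonically isomorphic, so $\wh{\Xs}$ is well defined up to canonical isomorphism. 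The same mechanism handles morphisms and 2-morphisms: a morphism $\phi\colon\Xs\to\Xs'$ is presented by a zigzag $G\lTof{\sim} H \to G'$ (Proposition~\eqref{P:Stack Morphisms are groups}), and its Pontryagin dual $\wh{G}\to\wh{H}\lTof{\sim}\wh{G'}$ presents an arrow $\wh{\Xs'}\to\wh{\Xs}$; a 2-arrow between two morphisms is dualized by first refining both representing zigzags to a common intermediate groupoid and then invoking the 2-functoriality of $\wh{(\cdot)}$ already established on $\Pic_{\LCH}$.

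The main obstacle I anticipate is not any isolated calculation but the bookkeeping required to keep this zigzag calculus coherent: one must verify that different choices of intermediate groupoid $H$ yield canonically isomorphic dualizations, that composition of morphisms is preserved, and that horizontal and vertical compositions of 2-morphisms dualize compatibly. Each of these reductions relies on Proposition~\eqref{P:Pont of Morita is Morita} (duality preserves and reflects Morita equivalences in $\Pic_{\LCH}$) combined with the 2-functoriality already present at the groupoid level, so the argument is ultimately formal; the real work is in organizing the case analysis into a well-defined arrow-reversing 2-functor on the 2-category of representable Picard $\LCH$-stacks.
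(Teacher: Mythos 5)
Your proposal is correct and follows essentially the same route as the paper: define $\Pont(\Xs):=[\wh{A}//\wh{B}]$ from a presentation, dualize the zig-zags furnished by Proposition \eqref{P:Stack Morphisms are groups}, and invoke Proposition \eqref{P:Pont of Morita is Morita} to see that the Morita legs remain Morita (the paper handles the choice-of-presentation issue by fixing a preferred presentation, or taking the direct sum of all representable epimorphisms, rather than your explicit well-definedness check, but the mechanism is identical). One small slip: in the dual zig-zag the equivalence sits on the leg $\wh{G}\Tof{\sim}\wh{H}$, which is the dual of the Morita equivalence $H\to G$, not on $\wh{G'}\to\wh{H}$ as you wrote; your conclusion that the zig-zag presents an arrow $\wh{\Xs'}\to\wh{\Xs}$ is nonetheless the right one.
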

\begin{proof} Our proposed definition of the Pontryagin dual of a representable Picard stack $\Xs\simeq[B//A]$ is of course
\[ \Pont(\Xs):=[\wh{A}//\wh{B}].\]
To make this operation functorial, one should first choose for each such Picard stack a preferred presentation $[A\to B]$.  (Or to avoid this choice, one could use the direct sum $\coprod B$ of all representable epimorphisms to $\Xs$ by groups $B$, which is again a locally compact group.)

Next, by Proposition \eqref{P:Stack Morphisms are groups} Part (3), a morphism $\Xs\tof{\phi}\Xs'$ of such Picard stacks may be presented by a zig-zag of short chain complexes of groups, and dualizing these morphisms provides the Pontryagin dual morphism $\Pont(\Xs')\tof{\phi^*}\Pont(\Xs)$.  In particular, after Proposition \eqref{P:Pont of Morita is Morita}, a morphism of Picard stacks is an isomorphism if and only if the Pontryagin dual morphism is.
\end{proof}
\begin{remark}\label{R:Other Pont Duality defs}
There is a definition of Pontryagin duality in the literature (see e.g. \cite{DonPan},\cite{BSST}), which is that the Pontryagin dual of $[B//A]$ is the group stack of homomorphisms to $[*//U(1)]$:
 \[ D([B//A]):=\HOM([B//A],[*//U(1)]). \]
It is shown in \cite{Del} that a short chain complex presentation of $D([B//A])$ is given by the truncation to degrees $\{-1,0\}$  of the derived functor $R^\bullet\Hom(A\to B,U(1)\to 0)$.  This dual does not in general agree with our dual, whose short complex presentation is the un-derived Hom.

For many groups the derived and un-derived duals are equivalent (see \cite{BSST}), but this can no longer be expected to hold once we pass to bundles of groups.  It seems that our definition results in the kernel of the canonical morphism \linebreak$\HOM([A//B],*//U(1))\to \HOM( [*//B],[*//U(1)]$.
\end{remark}

\section{Bundles of Picard Categories}\label{S:Picard Bundles}
In this section we will describe Picard bundles over a base space $M$.  These are meant to be presentations for bundles of Picard stacks, and thus are only ``bundles'' in the weak sense that they are bundles of short complexes (or Picard groupoids) defined locally on open subsets $U_\alpha\subset M$, and are glued together by chain automorphisms on intersections $U_\alpha\cap U_\beta$, and these gluings agree on triple intersections only up to a coherent set of homotopies.

For the base of our bundles we fix a topological space $M$ and a good cover $\M_0=\coprod_\alpha U_\alpha\to M$.  Associated to this cover is the \v{C}ech groupoid
\[ \M=(\M_1\arrows \M_0)=(\coprod U_\alpha\cap U_\beta\arrows \coprod U_\alpha). \]

Now a locally trivial bundle of groups with fiber $B$ may be presented via a \v{C}ech 1-cocycle, $\M_1\to \Aut(B)$, which is the same as a functor $\M\to (\Aut(B)\ltimes *)$.   A Picard bundle will be the result of generalizing from $B$ to $[A\to B]$, and upgrading the functor to a weak 2-functor $\M\dashto \AUT(A\to B)$.  We will describe this in detail now, using the short complex perspective.
\begin{definition}
Given a short complex $[A\tof{d} B]$, write $\Aut^0(A\to B)$ for the group of invertible degree zero chain morphisms from $A\to B$ to itself.  We define the 2-group $\AUT(A\to B)$ to be the groupoid whose objects are $\Aut^0(A\to B)$, and whose arrows are chain homotopies between these automorphisms.
\end{definition}

\begin{definition} A nonabelian cocycle $(\eta,\tau):\M\dashto \AUT(A\tof{d} B)$ is a pair of maps
\[ \M_1\Tof{\tau=(\tau_A,\tau_B)} \Aut^0(A\to B)\text{ and } \M_2\Tof{\eta}\Hom(B,A) \]
satisfying the cocycle conditions:
\begin{enumerate}
\item $\tau_{m_1}\circ\tau_{m_2}= \tau_{m_1m_2}+[d,\eta_{m_1,m_2}]$,
\item $\tau_{m_1}\circ\eta_{m_2,m_3}+\eta_{m_1,m_2m_3}= \eta_{m_1m_2,m_3}+\eta_{m_1,m_2}\circ\tau_{m_3}$,
\end{enumerate}
The maps $(\eta,\tau)$ are required to be continuous in the sense that $\M_1\times[A\to B]\tof{\tau} [A\to B]$ and $\M_2\times B\tof{\eta} A$ are continuous.
\end{definition}

\begin{definition} Let $(\eta,\tau):\M\dashto \AUT(A\to B)$ be a nonabelian cocycle.  The associated \textbf{Picard bundle}, denoted $[A\to B]_\M^{(\eta,\tau)}$ or $[A\to B]_\M^{(\eta,(\tau_A,\tau_B))}$ , refers to the (trivial) bundle of short complexes $\M_0\times (A\to B)$ together with the gluing-up-to-homotopy data $(\eta,\tau)$.

A morphism between two Picard bundles, denoted
\[ [A\to B]_\M^{(\eta,\tau)}\Tof{(\kappa, \phi)}[A'\to B']_\M^{(\eta',\tau')}, \]
is a pair of continuous maps
\[ \M_0\Tof{\phi}\Ch^0(A\to B,A'\to B')\qquad\M_1\Tof{\kappa} \Hom(B,A') \]
satisfying:
\begin{enumerate}
\item $\tau'_{m}\circ\phi_{sm}=\phi_{rm}\circ\tau_m+[d,\kappa_m]$,
\item $\tau'_{m_1}\circ\kappa_{m_2}+\kappa_{m_1}\circ\tau_{m_2}+\phi_{rm}\circ\eta_{m_1,m_2}=\eta'_{m_1,m_2}\circ\phi_{sm_2}+\kappa_{m_1m_2}$.
\end{enumerate}
A morphism is called \textbf{strict} if $\kappa=0$.  Note that for a strict morphism, Conditions (1)-(2) reduce to $\tau'\circ\phi=\phi\circ\tau$ and $\eta'\circ\phi=\phi\circ\eta$.
\end{definition}

Now we will show that the data of a Picard bundle induces a Picard groupoid over $\M$, and thus gives a presentation for a bundle of Picard stacks:
\begin{proposition}\label{P:Groupoid from Picard bundle}
\begin{enumerate}
\item Let $[A\tof{d}B]_\M^{(\eta,\tau)}$ be a Picard bundle.  Then the following structure defines a groupoid $\M^{(\eta,\tau)}$:
\[ \M^{(\eta,\tau)}:=(A\times B\times \M_1\arrows B\times \M_0) \]
\[ s(a,b,m):=(\tau_m^{-1}(b),sm),\quad r(a,b,m):=(da+b,rm) \]
\[ (a_1,\tau_{m_1}(da_2+\tau_{m_2}(b)),m_1)\circ (a_2,\tau_{m_2}(b),m_2):= (a_1+\tau_{m_1}(a_2)+\eta_{m_1,m_2}(b),\tau_{m_1m_2}(b),m_1m_2) \]
Furthermore, there is an obvious functor $\M^{(\eta,\tau)}\to\M$, and the group law
\[ \M^{(\eta,\tau)}\times_{\M}\M^{(\eta,\tau)}\Tof{+}\M^{(\eta,\tau)}\quad (a_1,b_1,m)+(a_2,b_2,m):=(a_1+a_2,b_1+b_2,m) \]
is functorial.
\item Let $[A\to B]_\M^{(\eta,\tau)}\Tof{(\kappa, \phi)}[A'\to B']_\M^{(\eta',\tau')}$ be a morphism of Picard bundles.  Then
\[ A\times B\times \M_1\to A'\times B'\times\M_1 \]
\[ (a,\tau_m(b),m)\longmapsto (\phi_{rm}(a)-\kappa_m(b),\phi_{rm}(\tau_m(b))+d\kappa_m(b),m) \]
is a functor $\M^{(\eta,\tau)}\to\M^{(\eta',\tau')}$ which intertwines the respective group laws.
\end{enumerate}
\end{proposition}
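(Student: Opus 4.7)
The plan is to verify the groupoid axioms for $\M^{(\eta,\tau)}$, the functor property of the projection to $\M$, the bifunctoriality of the group law, and for part~(2) the functor property of the induced map---all by direct computation, with the cocycle conditions (1)--(2) on $(\eta,\tau)$ and the morphism conditions (1)--(2) playing exactly the role of the coherent associativity/compatibility needed.

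For part~(1), I would first check composability: computing the source of the left factor and the range of the right factor in the given composition formula both yield $(da_2+\tau_{m_2}(b),\,sm_1)=(da_2+\tau_{m_2}(b),\,rm_2)$, so an arrow pair is composable if and only if $(m_1,m_2)$ is composable in $\M$. Next, the source of the composite is $(b,sm_2)$, matching the source of the right factor, while the range of the composite equals the range of the left factor precisely because cocycle condition~(1) applied to $b\in B$ gives $\tau_{m_1}\tau_{m_2}(b)=\tau_{m_1m_2}(b)+d\eta_{m_1,m_2}(b)$ (here $[d,\eta]$ restricts to $d\circ\eta$ on $B$, since $\eta$ is degree~$-1$). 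The central verification is associativity: expanding the two parenthesizations of a composable triple with data $a_1,a_2,a_3\in A$ and $b\in B$, the $a_3$-terms cancel by cocycle condition~(1) evaluated on $a_3\in A$ (where $[d,\eta]$ restricts to $\eta\circ d$), while the $b$-terms cancel by cocycle condition~(2):
\[ \tau_{m_1}\eta_{m_2,m_3}(b)+\eta_{m_1,m_2m_3}(b)=\eta_{m_1,m_2}\tau_{m_3}(b)+\eta_{m_1m_2,m_3}(b). \]
Identities $(0,b,1_u)$ and inverses are written explicitly using the inverses of $\tau_m$, invoking the standard normalizations $\tau_{1_u}=\id$ and $\eta_{1_u,m}=\eta_{m,1_u}=0$. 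The projection $\M^{(\eta,\tau)}\to\M$ forgetting $(a,b)$ is manifestly a functor. Bifunctoriality of $+$, i.e.\ $(x_1+x_1')\circ(x_2+x_2')=(x_1\circ x_2)+(x_1'\circ x_2')$, holds because the composition formula is additive in each of $a_i$ and $b$, with $\tau,\eta$ acting additively.

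For part~(2), the prescribed map lies over the identity of $\M$, so it suffices to verify preservation of source, range, composition, and the group law. The object-level map is forced to be $(b,u)\mapsto(\phi_u(b),u)$. Preservation of the range coordinate is automatic from $\phi$ being a chain map ($d\phi=\phi d$); preservation of the source coordinate reduces to the identity $\tau'_m\phi_{sm}(b)=\phi_{rm}\tau_m(b)+d\kappa_m(b)$, which is morphism condition~(1) on $b\in B$. Preservation of composition, after carrying both composites through and cancelling the terms that match trivially, reduces to
\[ \tau'_{m_1}\kappa_{m_2}(b)+\kappa_{m_1}\tau_{m_2}(b)+\phi_{rm_1}\eta_{m_1,m_2}(b)=\eta'_{m_1,m_2}\phi_{sm_2}(b)+\kappa_{m_1m_2}(b), \]
which is morphism condition~(2) evaluated on $b$. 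Intertwining with the group law is immediate, since the functor formula is additive in $(a,b)$ with $\phi$ and $\kappa$ acting linearly.

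The main obstacle is purely bookkeeping: the associativity calculation in part~(1) and the composition-preservation calculation in part~(2) are lengthy expansions in which the cocycle and morphism conditions are engineered to produce exactly the needed cancellations. There is no conceptual difficulty---only the risk of sign or index errors in expanding the nested actions of $\tau$ and $\eta$.
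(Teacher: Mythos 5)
Your proposal is correct and takes essentially the same approach as the paper: the paper's proof simply declares the verification of associativity and functoriality of addition ``tedious but straightforward,'' and your computations carry out exactly that verification, correctly identifying that the $B$-component of cocycle condition (1) gives range-compatibility, the $A$-component plus condition (2) give associativity, and morphism conditions (1)--(2) give source-preservation and composition-preservation in part (2). Your explicit flagging of the unit normalizations $\tau_{1_u}=\id$, $\eta_{1_u,m}=\eta_{m,1_u}=0$ is a point the paper glosses over entirely, so if anything your write-up is more careful than the original.
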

\begin{proof} This is just a matter of verifying associativity of the groupoid composition and functoriality of the addition, which is tedious but straightforward.
\end{proof}
\begin{remark} One can recover the cocycle $(\eta,\tau)$, and thus the Picard bundle $[A\tof{d}B]_\M^{(\eta,\tau)}$, from the groupoid structure maps of $\M^{(\eta,\tau)}$. Thus one could skip short complexes altogether and work from the start with groupoids of this form.  However, at least for us, the short complex perspective seems much cleaner.
\end{remark}
\begin{definition}  The \textbf{Pontryagin dual} of a Picard bundle $[A\to B]_{\M}^{(\eta,\tau)}$, is the Picard bundle $[\wh{B}\to\wh{A}]_{\M^{op}}^{(\wh{\eta},\wh{\tau})}$. \end{definition}
To verify that this definition makes sense, the reader should check that $(\wh{\eta},\wh{\tau})$ does indeed satisfy the conditions of a non-abelian cocycle, albeit on the opposite category $\M^{op}$. The opposite arises because Pontryagin duality reverses arrows, so that for instance
\[ \tau_{m_1}\circ\tau_{m_2}\rightsquigarrow \wh{\tau}_{m_2}\circ\wh{\tau}_{m_1}. \]

Let us now give, as examples, the types of Picard bundles that will be encountered in T-duality.  We use the notation $n's\in\Zb,\ z's\in U(1),\ t's\in T,\ v's\in V$, and write the group laws of $T$ and $U(1)$ multiplicatively, and all others additively:
\begin{example}\label{E:Picard Bundle Examples}\ \newline
\begin{enumerate}
\item A Picard bundle of the form $[0\to B]_\M^{(\eta,\tau)}$ is a presentation of a bundle of groups over $M$.  Indeed, necessarily $\eta=0$, so $\tau$ satisfies $\tau_{m_1}\tau_{m_2}=\tau_{m_1m_2}$ and is thus an equivalence relation on $\M_0\times B$.  The resulting quotient space $B_\tau:=\M_0\times B/\sim\tau$ is a locally trivial bundle of groups on $M$.
\item A sub-example of $(1)$ is the following: Given a principal torus bundle $P\to M$, the disjoint union $\coprod_{n\in \Zb}P^n$ of the $n$-th powers of $P$ is a locally trivial bundle of groups with fiber $\Zb\times T$. A choice of transition functions $\rho:\M_1\to T$ defining $P$ provides a Picard bundle presentation of $\coprod_{n\in \Zb}P^n$ as
    \[ \Ps= [0\to \Zb\times T]_\M^{(0,\tau)},\qquad  \tau_m(n,t):=(n,\rho^n_mt). \]
\item\label{Ex:Powers of Affine} A slight generalization of the previous example is the disjoint union of the $n$-th powers of an affine torus bundle.  The semi-direct product decomposition $\Aff(T)=T\rtimes\Aut(T)$ decomposes a 1-cocycle with values in $\Aff(T)$ as
    \[ \M_1\to T\times\Aut(T),\qquad m\mapsto(\rho_m,\tau^0_m). \]
    The cocycle condition implies that $\tau^0_{m_1}\tau^0_{m_2}=\tau^0_{m_1m_2}$ and also $\rho^n_{m_1}\tau^0_{m_1}(\rho^n_{m_2})=\rho^n_{m_1m_2}$ for all $n\in \Zb$.  Thus
    \[ \tau:\M_1\to\Aut(\Zb\times T),\qquad \tau_m(n,t):=(n,\rho^n_m\tau^0_m(t)) \]
    defines a Picard bundle $[0\to \Zb\times T]_\M^{(0,\tau)}$ which may be viewed as the union of the $n$-powers of the affine torus bundle.  Note that when $\tau^0=\id$, this reduces to the previous example.
\item Given a $U(1)$-gerbe $\Gc$ on a space $M$, a choice of \v{C}ech 2-cocycle $\sigma:\M_2\to U(1)$ representing its Dixmier-Douady class $[\sigma]\in H^2(M,U(1))$ gives rise to the following Picard bundle:
    \[  [U(1)\tof{0}\Zb]_\M^{(\eta,\id)},\qquad \eta_{m_1,m_2}(n):=\sigma_{m_1,m_2}^n, \]
     which is a presentation of the union of the n-th powers of $\Gc$ (by n-th power we mean the gerbe whose D-D class is $\sigma^n$).  In fact, according to Proposition \eqref{P:Groupoid from Picard bundle}, we have an associated groupoid $\M^{(\eta,\tau)}$ whose arrow space is $U(1)\times\Zb\times\M_1$, and for each $n\in \Zb$, the sub-groupoid $U(1)\times\{n\}\times\M_1\subset U(1)\times\Zb\times\M_1$ is the standard groupoid presentation of the n-th power of $\Gc$.
\item Combining the two previous examples, consider the morphism
\[ \Gs:=[U(1)\tof{0}\Zb\times T]_\M^{(\eta,(\id_{U(1)},\tau))}\To \Ps=[0\to \Zb\times T]_\M^{(0,\tau)} \]
 induced by $U(1)\mapsto 0$.  This makes $\M^{(\eta,(\id_{U(1)},\tau))}$ a (groupoid presentation of) a $U(1)$-gerbe over $\M^{(0,\tau)}$.  Furthermore, $\eta$ may be decomposed as $\eta_{m_1,m_2}(n,t)=\sigma_{m_1,m_2}^n\eta^0_{m_1,m_2}(t)$, and for each $n$, $\eta(n,-):\M_2\times T\to U(1)$ is a groupoid 2-cocycle presenting a $U(1)$-gerbe on the affine torus bundle $P^n=\M_0\times T/\sim \tau(n,-)$.

 In particular, setting $n=0$ we have
 \[ \Gs^0=[U(1)\to T]^{(\eta^0,(id_{U(1)},\tau^0))}\To \Ps^0:=[0\to T]^{(0,\tau^0)} \]
 which is a Picard bundle presentation of a gerbe over the $0$-th power of $P$. We will refer to $\Gs^0\subset\Gs$ as the \textbf{degree-0} piece of $\Gs$.
\end{enumerate}
\end{example}
\begin{remark} With regards to the last example, we should remark that not every $U(1)$-gerbe on an affine or principal torus bundle $P\to M$ extends to a $U(1)$-gerbe on $\coprod P^n$.  In fact, we will see that this property characterizes those gerbes which admit T-duals.
\end{remark}

\section{Construction of T-duals}\label{S:Construction of T-duals}
In this section we construct the $T$-dual of a $U(1)$-gerbe on an affine torus bundle $P$.  We will use the notation developed in Example \eqref{E:Picard Bundle Examples}.  The construction relies on two assumptions:
\begin{enumerate}
\item The gerbe on $P$ extends to a gerbe on the Picard bundle \linebreak$\Ps:=[0\to \Zb\times T]^{(0,\tau)}$ representing the union of the n-th powers of $P$.
\item The canonical quotient morphism $[0\to \Zb\times T]^{(0,\tau)}\to [V\tof{q}\Zb\times T]^{(0,(\tau^0,\tau))}$ lifts to the gerbe.  Here
\[ (\tau^0,\tau):\M_1\to \Aut^0(V\tof{q}\Zb\times T),\quad (v,n,t)\longmapsto (\tau^0(v),n,\tau(n,t)),\]
which makes sense as long as $\tau^0:\M_1\to\Aut(T)$ acts on $V$ by the inclusion $\Aut(T)\subset\Aut(V)$.
\end{enumerate}
We explain in Remark \eqref{R:Assumption 1 implies 2} that the first assumption implies the second.  We also claim that the first assumption automatically holds for any $U(1)$-gerbe on a principal bundle which admits a (commutative) T-dual in the sense of \cite{MatRos} or \cite{BunSch}, though we will not prove it in this article.

Now let us proceed with the construction.  It will involve a quotient by the bundle of groups $V_{\tau^0}:=\M_0\times V/\sim{\tau^0}$, then a Pontryagin dualization, and then an optional step which places the dual in a nice form.

Recall the notation that $\Ps:=[0\to \Zb\times T]^{(0,\tau)}$ is the union of the $n$-th powers of an affine torus bundle, $\Gs:=[U(1)\to \Zb\times T]^{(\eta,(id_{U(1)},\tau)}$ is a Picard bundle representing a $U(1)$-gerbe over $\Ps$, and $\Gs_0=[U(1)\to \Zb\times T]^{(\eta^0,(id_{U(1)},\tau^0)))}$ is the degree-0 piece of $\Gs$.  There is a canonical quotient morphism
\[  [0\to \Zb\times T]^{(0,\tau)}\To [V\tof{q}\Zb\times T]^{(0,(\tau^0,\tau))}=:\Ps//V_{\tau^0}. \]
Lifting this quotient to $\Gs$ amounts to finding a non-abelian cocycle $(\wt{\eta},(\alpha,\tau))$ making the following diagram commute:
\begin{equation}\label{E:V-quotient}
\xymatrix{
\Gs=[U(1)\to \Zb\times T]^{(\eta,(\id_{U(1)},\tau))}_\M \ar[rr]\ar[d] && \Gs//V_{\tau^0}:=[U(1)\times V\tof{0\times q} \Zb\times T]^{(\wt{\eta},(\alpha,\tau))}_\M\ar[d] \\
\Ps=[0\to \Zb\times T]^{(0,\tau)}           \ar[rr]       && \Ps// V_{\tau^0}:=[V\tof{q}\Zb\times T]^{(0,(\tau^0,\tau))}.
} \end{equation}

In the upper left corner, we have $\eta:\M_2\to \Hom(\Zb\times T,U(1))$.  We claim that in the upper right corner $\wt{\eta}$ must be the same map, albeit viewed as taking values in $\Hom(\Zb\times T,U(1)\times V)$.  Indeed, the $U(1)$-value of $\wt{\eta}$ in the upper right must agree with the $\eta$ in the upper left, and the $V$-value of $\wt{\eta}$ must be $0$ to agree with the bottom right.  From now on we will write $\wt{\eta}=\eta$.

Thus the only new map is $\alpha$, which describes the lifted $V$-action.  By commutativity of the diagram, $\alpha:\M_1\to \Aut(U(1)\times V)$ must be of the form
\[ \alpha_m(z,v)=(\psi_m(v)z,\tau^0_m(v)) \]
for some $\psi:\M_1:\to \Hom(V,U(1))=\wh{V}$, and the reader can check that $(\eta,(\alpha,\tau))$ is a non-abelian cocycle if and only if
\begin{equation}\label{E:bounding cochain} (\psi_{m_1}\circ \tau^0_{m_2})(v)\psi_{m_2}(v)=\psi_{m_1m_2}(v)\eta^0_{m_1,m_2}(qv). \end{equation}

\begin{remark}\label{R:Assumption 1 implies 2}[Asumption (1) implies Assumption (2) and Uniqueness] The cocycle condition satisfied by $q^*\eta^0:\M_2\to \wh{V}$ makes it a \v{C}ech representative for a degree 2 cohomology class with values in the sheaf of sections of the bundle $\wh{V_{\tau^0}}$.  But the existence of (continuous) partitions of unity imply that $H^{k>0}(M,\wh{V_{\tau^0}})=0$, so that $q^*\eta^0$ must be a boundary.  Thus the bounding cochain $\psi$ of Equation \eqref{E:bounding cochain} is certain to exist.  Furthermore, any two such cochains differ by a 1-cocycle, which must itself be a boundary.  This implies that the $V_{\tau^0}$-quotient is not only certain to exist, but is unique up to equivalence.
Of course in the algebro-geometric setting these cohomology groups do not vanish, so Assumption (2) is necessary.
\end{remark}

\begin{remark} The map $\psi$ induces a morphism of Picard bundles
\[ [0\to V]_\M^{(0,\tau^0)}\Tof{(\phi,\kappa)}[U(1)\to \Zb\times T]^{(\eta,(\id_{U(1)},\tau))}_\M \]
where $\phi= q:V\to T$ and $\kappa:=\psi$.  Thus the right side of \eqref{E:V-quotient} is literally the quotient $\Gs//V_{\tau^0}$ by a homomorphism.
\end{remark}

Now we proceed to Step (2), which is Pontryagin dualization of $\Gs//V_{\tau^0}$.  Keeping in mind that $\wh{\Zb}=U(1)$, this is:
\[
\Pont([U(1)\times V\tof{0\times q} \Zb\times T]^{(\eta,(\alpha,\tau))}_\M \ = \ [U(1)\times \wh{T}\tof{0\times q^*} \Zb\times \wh{V}]_{\M^{op}}^{(\wh{\eta},(\wh{\tau},\wh{\alpha}))}
\]
The inclusion $\Gs^0//V_{\tau^0}\into\Gs//V_{\tau^0}$ corresponds fiberwise to $0\into \Zb$, which dualizes to $U(1)\to 0$.  Thus we obtain a morphism of Picard bundles:
\begin{equation}\label{E:T-dual-first-form} \xymatrix{
\Pont(\Gs//V_{\tau^0}):= & [U(1)\times \wh{T}\tof{0\times \wh{q}} \Zb\times \wh{V}]_{\M^{op}}^{(\wh{\eta},(\wh{\tau},\wh{\alpha}))}\ar[d]\\ \Pont(\Gs_0// V_{\tau^0}):= & [\wh{T}\tof{\wh{q}} \Zb\times \wh{V}]_{\M^{op}}^{(\wh{\eta^0},(\wh{\tau^0},\wh{\alpha}))}}.
\end{equation}
We claim that the object downstairs is a Picard bundle presentation of the union of $n$-th powers of an affine torus bundle, which is (by definition) the $T$-dual torus bundle.  Furthermore, the object upstairs is a $U(1)$-gerbe over the base, which is by definition the $T$-dual gerbe.  These claims are not obvious, but at least fiberwise they seem plausible if we keep in mind that $[\wh{T}\to \wh{V}]$ is a presentation of the T-dual torus $T^\vee=\wh{V}/\wh{T}$.

Now we proceed to Step (3), which we state as our main theorem.  It amounts to a recasting of the objects in Equation \eqref{E:T-dual-first-form} into a nice form that makes the claims of the previous paragraph obvious, and makes $T$-dualization obviously symmetric.

\begin{theorem}[Main Theorem] Suppose that $\Gs:=[U(1)\to \Zb\times T]_\M^{(\eta,(\id_{U(1)},\tau))}$  is a Picard bundle presentation of a gerbe over an affine torus bundle $\Ps:=[0\to \Zb\times T]_\M^{(0,\tau)}$.  Then $\Gs$ admits a T-dualization, which is unique up to equivalence.  The T-dual may be presented as:
\begin{equation}\label{E:T-dual-final-form} \xymatrix{
\Gs^\vee:= & [U(1)\to \Zb\times T^\vee]_{\M^{op}}^{(\eta^\vee,(\id_{U(1)},\tau^\vee))} \ar[d]\\ \Ps^\vee:= & [0 \to\Zb\times T^\vee]_{\M^{op}}^{(0,\tau^\vee)}}
\end{equation}
where $\tau^\vee_m(n,t^\vee):=(n,\psi^n_m\wh{\tau^0}_m(t^\vee))$ and $\eta^\vee_{m_1,m_2}(n,t^\vee):=\sigma^{\vee n}_{m_1,m_2}\eta^{\vee 0}(t^\vee)$.  Here
\[ \sigma^\vee_{m_1,m_2}:=\sigma_{m_1,m_2}\langle\wt{\rho}_{m_1}^{-1},\psi_{m_2}\rangle,\qquad \eta^{\vee 0}=(\wt{\rho}^{-1}_{m_1}\circ\tau^0_{m_2})\wt{\rho}_{m_2}^{-1} \wt{\rho}_{m_1m_2},\]
and $\tau=(\tau^0,\rho)\in\Aut(T)\ltimes T$ defines the original affine bundle, $\wt{\rho}:\M_1\to V$ is any lift of $\rho$, and $\psi$ is a bounding cocycle for $\eta^0$ defining the $V_{\tau^0}$-quotient.
\end{theorem}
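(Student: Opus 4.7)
The strategy is to follow verbatim the three-step construction laid out in Section \eqref{S:Construction of T-duals}, then reconcile the output of Step~(2) with the explicit form in Equation \eqref{E:T-dual-final-form}. The first two steps already supply an existence-and-uniqueness argument: Step~(1), the $V_{\tau^0}$-quotient of $\Gs$, exists and is unique up to equivalence by Remark \eqref{R:Assumption 1 implies 2} (the partition-of-unity argument forces the relevant class in $H^2(M,\wh{V_{\tau^0}})$ to vanish, so the bounding cochain $\psi$ exists and is determined up to a coboundary); Step~(2) is the functorially defined Pontryagin dualization of Picard bundles from Section \eqref{S:Picard Bundles}. Thus the only content of the theorem is to produce an equivalence of Picard bundles between the object in Equation \eqref{E:T-dual-first-form} and the one in Equation \eqref{E:T-dual-final-form}.

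For that equivalence I would invoke Proposition \eqref{P:P:PicardMoritaEquivs}: it suffices to exhibit a fiberwise quasi-isomorphism of short complexes over $\M^{op}$ intertwining the cocycle data. Fiberwise, the dual of the short exact sequence $0\to\Lambda\to V\tof{q} T\to 0$ yields $0\to\wh{T}\to\wh{V}\to T^\vee\to 0$, hence a canonical quasi-isomorphism $[\wh{T}\tof{\wh{q}}\wh{V}] \Tof{\sim} [0\to T^\vee]$, which tensors trivially with the $U(1)$- and $\Zb$-factors. To promote this to a morphism of Picard bundles and read off $(\eta^\vee,\tau^\vee)$, I would fix a continuous lift $\wt{\rho}:\M_1\to V$ of $\rho:\M_1\to T$ (available because the good cover trivializes the $\Lambda$-bundle $V\to T$); this lift determines an explicit chain-level splitting of $\wh{V}\onto T^\vee$ on each overlap. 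Pushing $(\wh{\eta},(\wh\tau,\wh\alpha))$ through this splitting gives the cocycle on $[U(1)\to\Zb\times T^\vee]_{\M^{op}}$.

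The actual formulas for $\tau^\vee$, $\sigma^\vee$, and $\eta^{\vee 0}$ should then fall out by bookkeeping. The $\wh{\tau^0}_m$ and $\psi^n_m$ pieces of $\tau^\vee_m$ come directly from dualizing the two summands of $\alpha_m(z,v)=(\psi_m(v)z,\tau^0_m(v))$ and then pushing down through the quotient $\wh{V}\onto T^\vee$. The cross-term $\langle \wt\rho_{m_1}^{-1},\psi_{m_2}\rangle$ in $\sigma^\vee$ is the non-trivial chain-homotopy $\kappa$ required to make the transported map respect composition when the lift $\wt\rho$ is used to split $\wh{q}$; it measures the failure of $\psi$ to descend strictly through the splitting. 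The degree-$0$ piece $\eta^{\vee 0}$ is simply the $\Lambda$-valued cocycle defect $\wt\rho_{m_1m_2}-\wt\rho_{m_1}-\tau^0_{m_1}(\wt\rho_{m_2})$ paired with $t^\vee\in T^\vee=\wh{\Lambda}$, reflecting the fact that $\rho$ was a cocycle but $\wt\rho$ is only one modulo $\Lambda$. Independence of the choices of $\wt\rho$ and $\psi$ (and hence well-definedness up to equivalence of Picard bundles) follows by checking that a different choice changes $(\eta^\vee,\tau^\vee)$ by a coboundary in the sense of the morphism definition.

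The main obstacle is the bookkeeping in this last step: one is simultaneously dualizing a non-abelian 2-cocycle and transporting it along a chain morphism that is itself only a quasi-isomorphism (not an isomorphism), so the resulting homotopy $\kappa$ on the source feeds nontrivially into the 2-cocycle $\eta^\vee$ on the target via the compatibility \textbf{(2)} in the definition of a morphism of Picard bundles. Keeping track of which $\Lambda$-valued, $V$-valued, and $\wh{V}$-valued pieces land where — and verifying the cocycle identities for $(\eta^\vee,\tau^\vee)$ hold — is tedious but mechanical. Finally, the symmetry $(\Gs^\vee)^\vee\simeq\Gs$ is immediate from the shape of Equation \eqref{E:T-dual-final-form}: it has the same form as $\Gs$, with $(T,\rho,\sigma,\eta^0)$ interchanged with $(T^\vee,\psi,\sigma^\vee,\eta^{\vee 0})$, and the cross-term $\langle\wt\rho^{-1},\psi\rangle$ is manifestly symmetric in the two bundles under the canonical identification $V\simeq\wh{\wh V}$.
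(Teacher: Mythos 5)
Your proposal is correct and takes essentially the same route as the paper's proof: existence and uniqueness come from Remark \eqref{R:Assumption 1 implies 2} together with functoriality of Pontryagin dualization, and your identification of \eqref{E:T-dual-first-form} with \eqref{E:T-dual-final-form} is exactly the paper's morphism $(\phi,\kappa)$, with $\phi$ the fiberwise quotient coming from the quasi-isomorphism $[\wh{T}\to\wh{V}]\simeq[0\to T^\vee]$ and your splitting-induced homotopy coinciding with the paper's $\kappa_m(n,\wh{v})=\langle\wt{\rho}_m^{-1},\wh{v}\rangle$. The paper likewise leaves the cocycle bookkeeping implicit, so your deferral of that verification matches the published proof's level of detail.
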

\begin{proof}  We will just write down a morphism of Picard bundles\linebreak $\Pont(\Gs//V_{\tau^0})\to\Gs^\vee$ which is an equivalence.  Thus define
\[
[U(1)\times \wh{T}\tof{0\times \wh{q}} \Zb\times \wh{V}]_{\M^{op}}^{(\wh{\eta},(\wh{\tau},\wh{\alpha}))}\Tof{(\phi,\kappa)} [U(1)\to \Zb\times T^\vee]_{\M^{op}}^{(\eta^\vee,(id_{U(1)},\tau^\vee))}
\]
by letting $\phi$ be the constant quotient map $[U(1)\times \wh{T}\to \Zb\times \wh{V}]\Tof{\phi} [U(1)\to \Zb\times T^\vee]$, and setting
\[ \kappa:\M_1\to \Hom(\Zb\times \wh{V},U(1)),\qquad \kappa_m(n,\wh{v}):=\langle\wt{\rho}^{-1}_m,\wh{v}\rangle. \]
Then modding out $U(1)$ sends $\kappa$ to the zero map, thus we obtain from $(\phi,\kappa)$ a strict equivalence $\Pont(\Gs^0//V_{\tau^0})\to\Ps^\vee$:
\[
[\wh{T}\tof{0\times \wh{q}} \Zb\times \wh{V}]_{\M^{op}}^{(\wh{\eta^0},(\wh{\tau^0},\wh{\alpha}))}\Tof{(\phi,0)} [0\to\Zb\times T^\vee]_{\M^{op}}^{(0,\tau^\vee))}.
\]
\end{proof}
One can also easily verify the $\Pont(\Ps//V_{\tau^0})$ becomes the degree-0 part of $\Gs^\vee$. This fact about T-dualization, that the roles of $\Ps$ and $\Gs^0$ are interchanged, is quite important.  It means in particular that taking the union of powers of the affine bundle $P$ is important, because all of those powers contribute to the gerbe on the T-dual side.

\section{Conclusion}\label{S:Relation}
In summary, we have described a concrete method of T-dualizing a $U(1)$-gerbe $\Gc\to P$ over an affine torus bundle, which depends on embedding it into a $U(1)$-gerbe $\Gs\to \Ps$ over the union of all powers of the affine bundle.  Our method produces a unique T-dual once such an embedding is chosen (though different embeddings may result in non-equivalent T-duals, which is expected, because T-duality is not typically 1-to-1), thus if defined from the beginning as an operation on group stacks, T-dualization may even be viewed as a functor.

We also saw that T-dualization applies separately to the bundle $\Ps$, the gerbe $\Gs$, and also the degree zero component $\Gs_0$ of $\Gs$.  This uncovers a core feature of T-duality, which is that it interchanges the roles of $\Ps$ and $\Gs_0$--- $\Ps$ becomes the degree zero component of the dual gerbe, and $\Gs^0$ becomes the dual torus bundle.

Finally, we should indicate how this relates to the topological, algebro-geometric, and differential geometric models of T-duality.  

Our setup is closest to the topological version of \cite{MatRos}, in fact if $\Gs\to\Ps$ is a Picard bundle presentation of a $U(1)$-gerbe, then we may take the associated groupoid as in Proposition \eqref{P:Groupoid from Picard bundle}, and then form the $C^*$-algebra of this groupoid.  Then modding out $V$ becomes the operation of taking the crossed product $C^*$-algebra for the $V$-action.  Pontryagin duality becomes Fourier transform, which is an isomorphism and may be ignored in the $C^*$-algebra picture.

To make contact with the algebro-geometric picture, we should produce a Poincar\'e line bundle which induces the equivalence of derived categories as in \cite{DonPan}.  We will just describe the case of a single torus (which was pointed out to us by Martin Olsson).  In this case
there is a canonical 1-dimensional complex line bundle on $[U(1)\to 0]$--- it is simply $\Cb$ with the usual action of $U(1)$.  There is also a canonical morphism of stacks:
\[ T\times T^\vee\Tof{q} [T//V]\times T^\vee\Tof{\ev}[*//U(1)] \]
where ``$\ev$'' is the evaluation map on the product of $[T//V]\simeq[*//\Lambda]$ with its Pontryagin dual $T^\vee=\Pont([*//\Lambda])$.  The reader can verify that
\[ \Lc:=q^*\ev^*(\Cb)\in \textrm{Vect}(T\times T^\vee) \]
is in fact the Poincar\'e line bundle.

In the differential geometric picture of \cite{CavGua} (again we will consider only a single torus), one produces an isomorphism between the translation invariant sections of the generalized tangent bundle of $T$ and those of $T^\vee$, and from that produces an isomorphism between the generalized geometrical structures on $T$ and those on $T^\vee$.  The isomorphism requires a ``dualizing'' 2-form  $F\in\Omega^2(T\times T^\vee)$, which in this case may be taken to be the curvature of the line bundle $\Lc$.

It should be noted however, that in the setup of \cite{CavGua}, $F$ guarantees the existence of an isomorphism but does not produce a canonical one.  Thus an interesting question for further study is whether T-dualization might be used to transport geometric structures across T-duality in a direct and canonical way.  To do this one would need to understand generalized geometrical structures on a Picard stack, which to our knowledge have not been studied.

\bibliographystyle{amsalpha}

\end{document}